\newlength{\halfpagewidth}
\newtheorem{definition}{Definition}
\newtheorem{proposition}[definition]{Proposition}
\newtheorem{Lemma}{Lemma}
\newtheorem{Theorem}[definition]{Theorem}
\newtheorem{conjecture}[definition]{Conjecture}
\newtheorem{remark}[definition]{Remark}
\newtheorem{example}[definition]{Example}
\newtheorem{question}[definition]{Question}
\def\squareforqed{\hbox{\rlap{$\sqcap$}$\sqcup$}}
\def\qed{\ifmmode\squareforqed\else{\unskip\nobreak\hfil
		\penalty50\hskip1em\null\nobreak\hfil\squareforqed
		\parfillskip=0pt\finalhyphendemerits=0\endgraf}\fi}
\def\endenv{\ifmmode\;\else{\unskip\nobreak\hfil
		\penalty50\hskip1em\null\nobreak\hfil\;
		\parfillskip=0pt\finalhyphendemerits=0\endgraf}\fi}
\newenvironment{proof}{\noindent \textbf{{Proof.~} }}{\qed}
\def\Dbar{\leavevmode\lower.6ex\hbox to 0pt
	{\hskip-.23ex\accent"16\hss}D}
\def\url@leostyle{%
	\@ifundefined{selectfont}{\def\UrlFont{\sf}}{\def\UrlFont{\small\ttfamily}}}
\def\bcj{\begin{conjecture}}
	\def\ecj{\end{conjecture}}
\def\bcr{\begin{corollary}}
	\def\ecr{\end{corollary}}
\def\bd{\begin{definition}}
	\def\ed{\end{definition}}
\def\bea{\begin{eqnarray}}
\def\eea{\end{eqnarray}}
\def\bem{\begin{enumerate}}
	\def\eem{\end{enumerate}}
\def\bex{\begin{example}}
	\def\eex{\end{example}}
\def\bim{\begin{itemize}}
	\def\eim{\end{itemize}}
\def\bl{\begin{lemma}}
	\def\el{\end{lemma}}
\def\bma{\begin{bmatrix}}
	\def\ema{\end{bmatrix}}
\def\bpf{\begin{proof}}
	\def\epf{\end{proof}}
\def\bpp{\begin{proposition}}
	\def\epp{\end{proposition}}
\def\bqu{\begin{question}}
	\def\equ{\end{question}}
\def\br{\begin{remark}}
	\def\er{\end{remark}}
\def\bt{\begin{theorem}}
	\def\et{\end{theorem}}
\def\btb{\begin{tabular}}
	\def\etb{\end{tabular}}
\newcommand{\nc}{\newcommand}
\def\r{\rho}
\def\s{\sigma}
\def\c{\chi}
\nc{\bbA}{\mathbb{A}} \nc{\bbB}{\mathbb{B}} \nc{\bbC}{\mathbb{C}}
\nc{\bbD}{\mathbb{D}} \nc{\bbE}{\mathbb{E}} \nc{\bbF}{\mathbb{F}}
\nc{\bbG}{\mathbb{G}} \nc{\bbH}{\mathbb{H}} \nc{\bbI}{\mathbb{I}}
\nc{\bbJ}{\mathbb{J}} \nc{\bbK}{\mathbb{K}} \nc{\bbL}{\mathbb{L}}
\nc{\bbM}{\mathbb{M}} \nc{\bbN}{\mathbb{N}} \nc{\bbO}{\mathbb{O}}
\nc{\bbP}{\mathbb{P}} \nc{\bbQ}{\mathbb{Q}} \nc{\bbR}{\mathbb{R}}
\nc{\bbS}{\mathbb{S}} \nc{\bbT}{\mathbb{T}} \nc{\bbU}{\mathbb{U}}
\nc{\bbV}{\mathbb{V}} \nc{\bbW}{\mathbb{W}} \nc{\bbX}{\mathbb{X}}
\nc{\bbZ}{\mathbb{Z}}
\nc{\bA}{{\bf A}} \nc{\bB}{{\bf B}} \nc{\bC}{{\bf C}}
\nc{\bD}{{\bf D}} \nc{\bE}{{\bf E}} \nc{\bF}{{\bf F}}
\nc{\bG}{{\bf G}} \nc{\bH}{{\bf H}} \nc{\bI}{{\bf I}}
\nc{\bJ}{{\bf J}} \nc{\bK}{{\bf K}} \nc{\bL}{{\bf L}}
\nc{\bM}{{\bf M}} \nc{\bN}{{\bf N}} \nc{\bO}{{\bf O}}
\nc{\bP}{{\bf P}} \nc{\bQ}{{\bf Q}} \nc{\bR}{{\bf R}}
\nc{\bS}{{\bf S}} \nc{\bT}{{\bf T}} \nc{\bU}{{\bf U}}
\nc{\bV}{{\bf V}} \nc{\bW}{{\bf W}} \nc{\bX}{{\bf X}}
\nc{\bZ}{{\bf Z}}
\nc{\cA}{{\cal A}} \nc{\cB}{{\cal B}} \nc{\cC}{{\cal C}}
\nc{\cD}{{\cal D}} \nc{\cE}{{\cal E}} \nc{\cF}{{\cal F}}
\nc{\cG}{{\cal G}} \nc{\cH}{{\cal H}} \nc{\cI}{{\cal I}}
\nc{\cJ}{{\cal J}} \nc{\cK}{{\cal K}} \nc{\cL}{{\cal L}}
\nc{\cM}{{\cal M}} \nc{\cN}{{\cal N}} \nc{\cO}{{\cal O}}
\nc{\cP}{{\cal P}} \nc{\cQ}{{\cal Q}} \nc{\cR}{{\cal R}}
\nc{\cS}{{\cal S}} \nc{\cT}{{\cal T}} \nc{\cU}{{\cal U}}
\nc{\cV}{{\cal V}} \nc{\cW}{{\cal W}} \nc{\cX}{{\cal X}}
\nc{\cZ}{{\cal Z}}
\nc{\hA}{{\hat{A}}} \nc{\hB}{{\hat{B}}} \nc{\hC}{{\hat{C}}}
\nc{\hD}{{\hat{D}}} \nc{\hE}{{\hat{E}}} \nc{\hF}{{\hat{F}}}
\nc{\hG}{{\hat{G}}} \nc{\hH}{{\hat{H}}} \nc{\hI}{{\hat{I}}}
\nc{\hJ}{{\hat{J}}} \nc{\hK}{{\hat{K}}} \nc{\hL}{{\hat{L}}}
\nc{\hM}{{\hat{M}}} \nc{\hN}{{\hat{N}}} \nc{\hO}{{\hat{O}}}
\nc{\hP}{{\hat{P}}} \nc{\hR}{{\hat{R}}} \nc{\hS}{{\hat{S}}}
\nc{\hT}{{\hat{T}}} \nc{\hU}{{\hat{U}}} \nc{\hV}{{\hat{V}}}
\nc{\hW}{{\hat{W}}} \nc{\hX}{{\hat{X}}} \nc{\hZ}{{\hat{Z}}}
\nc{\hn}{{\hat{n}}}
\def\max{\mathop{\rm max}}
\def\min{\mathop{\rm min}}
\def\rank{\mathop{\rm rank}}
\newcommand{\bra}[1]{\langle#1|}
\newcommand{\ket}[1]{|#1\rangle}
\begin{document}
	\title{A coherence quantifier based on the quantum optimal transport cost}
	
	\author{Xian Shi}\email[]
	{shixian01@gmail.com}
\affiliation{College of Information Science and Technology, Beijing University of Chemical Technology, Beijing 100029, China}
	
	%
	
	
	
	\date{\today}
	
	\pacs{03.65.Ud, 03.67.Mn}

\begin{abstract}
\indent In this manuscript, we present a coherence measure based on the quantum optimal transport cost in terms of convex roof extended method. We also obtain the analytical solutions of the quantifier for pure states. At last, we propose an operational interpretation of the coherence measure for pure states.
\end{abstract}

\maketitle
\section{introduction}
\indent Quantum coherence is one of the crucial resources in quantum information processing \cite{streltsov2017}. It springs from the state superposition principle, which distinguishes quantum information from classical high. Coherence has high relations with other quantum resources, such as quantum entanglement \cite{horodecki2009quantum}, quantum nonlocality \cite{brunner2014bell}, and so on. Coherence also plays crucial roles in quantum algorithms \cite{shi2017coherence,rastegin2018role,ahnefeld2022coherence,naseri2022entanglement,ye2023tsallis,li2023evolution}, thermodynamical systems \cite{aaberg2014catalytic,narasimhachar2015low,lostaglio2015description}, transport theory \cite{witt2013stationary}, and biology \cite{plenio2008dephasing,huelga2013vibrations}. \par

One of the essential problems in quantum information theory is how to quantify coherence. In 2014, the authors proposed a framework with four postulates for a proper coherence measure \cite{baumgratz2014quantifying}, there the authors also presented proper coherence measures, $\mathrm{l}_1$ norm of coherence and the relative entropy of coherence. In 2015, Du $et$ $al.$ proposed a method to construct coherence measures and considered the optimal conversion for coherent states \cite{du2015coherence}. In 2016, Napoli $et$ $al.$ proposed the robustness of coherence and provided an operational interpretation of the coherence measure \cite{napoli2016robustness}. Subsequently, some coherence measures are offered based on the trace norm distance \cite{rana2016trace}, Tsallis-$q$ entropy \cite{rastegin2016}, Fisher information \cite{li2021quantum,yu2022quantifying}, and so on.  In addition, Yu $et$ $al.$ proposed an alternative framework for quantifying coherence, and they also showed that the coherence measure in terms of $1$-norm is improper based on the framework they proposed here \cite{yu2022quantifying}. Next, due to the relationship between coherence and entanglement, the approaches to building coherence measures based on entanglement monotone are shown \cite{streltsov2015measuring,zhu2017operational,tan2018coherence,Kim2023partial}. Subsequently, the methods to construct proper coherence measures based on other quantum correlations are proposed \cite{ma2016converting,Xiong2019partial}. Recently, the authors proposed a way to quantify the coherence of a quantum system via Kirkwood-Dirac quasiprobability \cite{budiyono2023quantifying}. \par 
The study of various distances between the states in quantum systems is important. Recently, the quantum Wasserstein distance, which originated from the quantum optimal transport, attracted much attention as it has been shown helpful in quantum theory. In 2021, Palma $et$ $al.$ proposed the Wasserstein distance of order 1 for $n$-qudit systems \cite{de2021quantum}. In 2022, Friedland $et$ $al.$ considered a quantum version of the Monge-Kantorovich optimal transport problem. There the authors conjected that 2-Wasserstein distance is monotone under partial traces \cite{friedland2022quantum}. Nevertheless, the author proved the conjecture is invalid and proposed a revised version of the quantum optimal transport cost \cite{muller2022monotonicity}. The authors in \cite{bistron2023monotonicity} also considered the problem of the monotonicity of a quantum 2-Wasserstein distance. Then Palma and Trevisan proposed a new quantum generalization of the Wasserstein distance, one-to-one correspondence with quantum channels \cite{palma2021quantum}. Besides the applications on the quantum states, the authors also showed that distance is helpful for quantum algorithms \cite{chakrabarti2019quantum,kiani2022learning,de2023limitations}. Recently, Toth considered how to define the quantum Wasserstein distance by optimizing over bipartite separable states \cite{toth2023quantum}. Then, is it possible to quantify the coherence by the quantum Wasserstein distance and whether a quantum task is related to the quantifier closely?

Quantum speed limits are used to denote the lower bound on the minimal time required to evolve from an initial state to a target state \cite{deffner2017quantum, caglioti2020quantum}. Recently, researchers paid much attention to the relationship between the quantum speed limit and quantum resource theories \cite{marvian2016quantum,rudnicki2021quantum,seddon2021quantifying,campaioli2022resource,mohan2022quantum,muthuganesan2022affinity,maleki2023speed}. One of the most famous examples is the Mandelstan-Tamm bound \cite{mandelstam1945quantum,caglioti2020quantum} for pairs of orthogonal pure states, and the bound can be generalized to arbitrary pairs of pure states $\ket{\phi_1}$ and $\ket{\phi_2},$ the minimal time required $\tau$ to evolve from $\ket{\phi_1}$ into $\ket{\phi_2}$ is
\begin{align}
\tau\ge \hbar\frac{\arccos(|\bra{\phi_1}\phi_2\rangle|)}{\overline{\triangle H}},\label{qst}
\end{align}
here $\overline{\triangle H}=\frac{1}{\tau}\int_0^{\tau}dt\sqrt{\bra{\phi_t}H_t^2\ket{\phi_t}-|\bra{\phi_t}H_t\ket{\phi_t}|^2}$, $H_t$ is Hamiltonian of the unitary evolution \cite{campaioli2020tightening}. 

In this paper, due to the beautiful properties of the revised quantum optimal transport cost in \cite{muller2022monotonicity}, we present a quantifier for the coherence of the quantum systems based on the cost. We also deliver the analytical solutions of the quantifier for pure states of qudit systems. Besides, we showed the quantifier is not a coherence measure. To amend the flaw, we proposed a coherence measure for mixed states based on the quantifier under the convex roof extended method. At last, we show that this measure links to the quantum speed limits for pure states.\par 
The paper is organized as follows, in Sec. \uppercase\expandafter{\romannumeral2} , we present the preliminary knowledge on the revised quantum optimal cost needed, in Sec. \uppercase\expandafter{\romannumeral3} , we present our main results. First, we consider a coherence quantifier based on the revised quantum optimal cost and show some properties of the measure. We also obtain the analytical values of the quantity for pure states of qudit system. Besides, we propose a coherence measure based on the quantum optimal cost under the convex roof extended method. At last, we obtain an operational interpretation of the coherence measure for pure states. In Sec. \uppercase\expandafter{\romannumeral4} , we end with a summary.
\section{Preliminary Knowledge}

\indent In this article, we assume $\mathcal{H}$ is a Hilbert space with finite dimension, let $\mathbb{H}(\mathcal{H}_d)=\{M|M=M^{\dagger}\}$ and $\mathcal{D}(\mathcal{H}_d)=\{\r|\r\ge0,tr\r=1\}.$ A pure state of $\mathcal{H}_d$ is a vector $\ket{\psi}\in \mathcal{H_d}$ with norm 1.\par 

In the section, we first recall the postulates for quantum coherence measures of states on finite dimensional systems, then we review the knowledge needed on quantum transport cost.

\subsection{Postulates for quantum coherence measures}
Let $\mathcal{H}$ be a $d$-dimensional Hilbert space. Let $\mathcal{E}=\{\ket{i}\}$ be the set consisting of a prescribed orthonormal basis in $\mathcal{H},$ the set of incoherent states $I_{\mathcal{E}}$ is composed of all the states that are diagonal with respect to the basis $\{\ket{i}\}$, that is, any incoherent state $\delta\in I_{\mathcal{E}}$ can be written as 
\begin{align*}
\delta=\sum_i \lambda\ket{i}\bra{i}.
\end{align*}

An important problem in coherence theory is how to quantify coherence of a state. In \cite{baumgratz2014quantifying}, the authors proposed the following postulates for functions $C:\mathcal{D}(\mathcal{H})\rightarrow \mathbb{R}^{+}$:\\
(B1). $C(\rho)\ge0$ for any $\rho\in\mathcal{D}(\mathcal{H})$, and $C(\rho)=0$ if and only if $\rho\in \mathcal{I}_{\mathcal{E}}$;\\
(B2). Monotonicity under incoherent completely positive and
trace preserving maps (ICPTP) $\Psi$, $C(\r)\ge C(\Psi(\r))$; \\
(B3). Monotonicity for average coherence under subselection based on measurements outcomes: $C(\rho)\ge \sum_i p_i C(\rho_i),$ here $\rho_i=\frac{K_i\rho K_i^{\dagger}}{p_i},$ $p_i=tr(K_i\rho K_i^{\dagger})$ for all $K_i$ satisfying $\sum_i K_i^{\dagger}K_i=I,$ and $K_i\mathcal{I}_{\mathcal{E}}K_i^{\dagger}\subset \mathcal{I}_{\mathcal{E}}$; \\
(B4). Non-increasing under mixing of quantum states, $\sum_i p_i C(\rho_i)\ge C(\sum_i p_i \r_i)$ for any $\{p_i,\rho_i\}.$ \par 
The Conditions (B1) and (B2) are the basic requirements for a quantity to be a coherence quantifier \cite{streltsov2017}. If $C$ satisfies all four properties, it is a faithful coherence measure.

Later, Yu $et$ $al.$ proposed an alternative framework for quantifying coherence \cite{yu2016alternative}. There (B3) and (B4) are replaced by the following,\\
 (C3). $C(p_1\r_1\oplus p_2\r_2)=p_1C(\r_1)+p_2C(\r_2)$ for block-diagonal states $\r$ in the incoherent basis. That is, when a quantity satisfies (B1), (B2) and (C3), it also satisfies (B1)-(B4).\\ 
\indent Geometric measure of coherence is a common used coherence measure based on the fidelity, which is defined as \cite{streltsov2015measuring}
\begin{align}
C_g(\rho)=1-\max_{\delta\in\mathcal{I}_{\mathcal{E}}}F(\rho,\delta),\label{geo}
\end{align}
where $F(\rho,\delta)=[tr(\sqrt{\rho}\sigma\sqrt{\rho})^{\frac{1}{2}}]^2$.  Moreover, when $\rho=\ket{\psi}\bra{\psi}$ is a pure state, then 
\begin{align}
C_g(\ket{\psi})=1-\max_i\{\psi_{ii}\},\label{g}
\end{align}
where $\psi_{ii}$ is the diagonal elements of $\psi$ with respect to the basis $\mathcal{I}_{\mathcal{E}}$.\par 

\subsection{Quantum transport cost}
\indent In this section, we first recall the definition of quantum coupling, then we review the knowledge on the quantum optimal transort cost.
\begin{definition}
	Assume $\rho_A$ and $\r_B$ are quantum states on $\mathcal{H}_A$ and $\mathcal{H}_B,$ respectively. A state $\r_{AB}$ on $\mathcal{H}_{AB}$ is called a coupling matrix for $(\r_A,\r_B)$ if 
	\begin{align*}
	tr_A\rho_{AB}=\rho_B,\hspace{3mm}tr_B\rho_{AB}=\rho_A.
	\end{align*}
\end{definition}

In this paper, we will denote the set of all quantum couplings of ($\rho_A$, $\rho_B$) as $\mathcal{B}(\rho_A,\rho_B)$. For any two states $\rho_A$ and $\rho_B,$ $\mathcal{B}(\rho_A,\rho_B)$ cannot be empty, it at least contains $\rho_A\otimes\rho_B.$\par
Next assume a bipartite Hilbert space $\mathcal{H}_{AB}=\mathcal{H}_A\otimes\mathcal{H}_B$, $\mathcal{H}_A$ and $\mathcal{H}_B$ are two Hilbert spaces with the same dimension $d$. Let $I_{\mathcal{H}_{AB}}$ and $S=\sum_{i,j=0}^{d-1}\ket{ij}\bra{ij}$ be the identity and SWAP operator of the space $\mathcal{H}_{AB},$ respectively. The symmetric subspace between $\mathcal{H}_A$ and $\mathcal{H}_B$ is defined as
\begin{align}
\mathcal{H}_A\vee\mathcal{H}_B=\{\ket{\psi}\in \mathcal{H}_{AB}|\hspace{3mm}S\ket{\psi}=\ket{\psi}\},
\end{align}
and the antisymmetric subspace between $\mathcal{H}_A$ and $\mathcal{H}_B$ is defined as
\begin{align}
\mathcal{H}_A\wedge\mathcal{H}_B=\{\ket{\psi}\in \mathcal{H}_{AB}|\hspace{3mm} S\ket{\psi}=-\ket{\psi}\}.
\end{align}

 Here we denote $P_s=\frac{1}{2}(I+S)$ and $P_a=\frac{1}{2}(I-S)$ as the projections onto the symmetric subspace and the antisymmetric subspace, respectively.	Then the quantum optimal transport cost for the states $\rho_A\in\mathcal{H}_A$,$\rho_B\in\mathcal{H}_B$ is given by 
	\begin{align}
		T(\rho,\sigma)=\min_{\c_{AB}\in \mathcal{B}(\rho_A,\rho_B)}tr[\c_{AB}P_{a}],
	\end{align}
	where the minimum takes over all the elements in $\mathcal{B}(\rho_A,\rho_B),$ which was studied in \cite{friedland2022quantum}. There the authors also showed that the cost satisfies the following monotonicity when $d=2,$ 
	\begin{align*}
	T(\Psi(\rho),\Psi(\sigma))\le T(\rho,\sigma),
	\end{align*}
	here $\rho$ and $\sigma$ take over all the states in $\mathcal{D}(\mathcal{H}_d)$ and all the channels $\Psi:\mathcal{H}_d\rightarrow \mathcal{H}_d$ when $d=2.$ However, the above inequality may be invalid when $d$ is bigger than 2 \cite{muller2022monotonicity}. In \cite{muller2022monotonicity}, the author proposed a revised cost $T_s(\rho_A,\sigma_B)$ of states $\rho_A$ and $\sigma_B$, which is is defined as
	\begin{align*}
	T_s(\rho_A,\sigma_B)=\inf_{\gamma} T(\rho\otimes\gamma,\sigma\otimes\gamma),
	\end{align*}
	where the infimum takes over all quantum states $\gamma \in \mathcal{D}(\mathcal{H}_d)$ of any dimenision $d.$ There the author showed that 
	\begin{align}
	T_s(\rho,\sigma)=&T(\rho\otimes\frac{I_2}{2},\sigma\otimes\frac{I_2}{2})\nonumber\\
	=&\min_{X_{AB},Y_{AB}} [tr(X_{AB}P_s(d)+Y_{AB}P_a(d))],\nonumber\\
\emph{s.t.}\hspace{5mm}& X_A+Y_A=\rho, \hspace{3mm} X_B+Y_B=\sigma_B.\nonumber\\
\hspace{3mm}&	X_{AB},Y_{AB}\ge0.\label{s1}
	\end{align}
 This quantity satisfies the unitary invariant and monotonicity. In the Appendix \ref{app}, we present the dual program of $T_s(\r,\s)$.

\indent 

\section{Main Results}
\indent In this section, we first propose a coherence quantifier based on the revised cost (\ref{s1}). 
Then we obtain the analytical formulas of pure states in terms of this coherence quantifier. Besides we propose a coherence measure based on the revised cost. At last, we present a relation between the measure and quantum speed limits for pure states.

 Assume $\rho\in \mathcal{D}(\mathcal{H}),$ $\tilde{T}(\rho)$ is defined as
\begin{align*}
\tilde{T}(\rho)=\min_{\delta\in\mathcal{I}_{\mathcal{E}}}T_s(\rho,\delta),
\end{align*}
where the minimum takes over all the states in $\mathcal{I}_{\mathcal{E}}.$\\

\begin{Theorem}
For a finite dimensional Hilbert space $\mathcal{H},$ $\tilde{T}(\cdot)$ satisfies (B1), (B2) and (B4). Moreover, it satisfies subadditivity for product states.
\end{Theorem}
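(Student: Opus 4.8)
The plan is to reduce each claimed property of $\tilde{T}$ to a corresponding structural property of the revised cost $T_s$. For (B1), non-negativity is immediate from the program (\ref{s1}): the projections $P_s(d),P_a(d)$ and every feasible $X_{AB},Y_{AB}$ are positive semidefinite, so the objective $\tr(X_{AB}P_s(d)+Y_{AB}P_a(d))$ is non-negative, whence $T_s\ge0$ and $\tilde{T}\ge0$. For faithfulness I would use the identity $T_s(\rho,\sigma)=T(\rho\otimes\tfrac{I_2}{2},\sigma\otimes\tfrac{I_2}{2})$ in both directions. If $\rho\in\mathcal{I}_{\mathcal{E}}$, take $\delta=\rho$: for any state $\mu=\sum_k\lambda_k\ket{k}\bra{k}$ the diagonal coupling $\sum_k\lambda_k\ket{kk}\bra{kk}$ lies in the symmetric subspace and has both marginals equal to $\mu$, so $T(\mu,\mu)=0$, giving $T_s(\rho,\rho)=0$ and $\tilde{T}(\rho)=0$. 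Conversely, if $\tilde{T}(\rho)=0$ then $T_s(\rho,\delta)=0$ for some incoherent $\delta$; since $\tr[\chi P_a]=0$ with $\chi\ge0$ forces $P_a\chi P_a=0$, the optimal coupling is supported on the symmetric subspace, so $S\chi S=\chi$, and tracing out either factor shows its two marginals coincide. Applied to $T(\rho\otimes\tfrac{I_2}{2},\delta\otimes\tfrac{I_2}{2})=0$ this gives $\rho=\delta\in\mathcal{I}_{\mathcal{E}}$.

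For (B2) I would invoke the stated monotonicity of $T_s$ under channels together with the defining feature of an incoherent map, $\Psi(\mathcal{I}_{\mathcal{E}})\subseteq\mathcal{I}_{\mathcal{E}}$. Letting $\delta^\star$ attain $\tilde{T}(\rho)=T_s(\rho,\delta^\star)$, the state $\Psi(\delta^\star)$ is again incoherent, so
\begin{align*}
\tilde{T}(\Psi(\rho))\le T_s(\Psi(\rho),\Psi(\delta^\star))\le T_s(\rho,\delta^\star)=\tilde{T}(\rho).
\end{align*}
For (B4) I would first record that $T_s$ is jointly convex: the feasible set of (\ref{s1}) depends affinely on the marginal pair and the objective is linear, so averaging optimizers $(X^i,Y^i)$ for pairs $(\rho_i,\delta_i)$ gives a feasible point for the averaged marginals attaining the averaged objective. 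Choosing incoherent minimizers $\delta_i^\star$ for each $\rho_i$ and noting $\sum_i p_i\delta_i^\star\in\mathcal{I}_{\mathcal{E}}$, joint convexity then yields $\tilde{T}(\sum_i p_i\rho_i)\le T_s(\sum_i p_i\rho_i,\sum_i p_i\delta_i^\star)\le\sum_i p_i T_s(\rho_i,\delta_i^\star)=\sum_i p_i\tilde{T}(\rho_i)$.

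The remaining and most delicate step, subadditivity, I would obtain from subadditivity of $T_s$ itself, $T_s(\rho_1\otimes\rho_2,\sigma_1\otimes\sigma_2)\le T_s(\rho_1,\sigma_1)+T_s(\rho_2,\sigma_2)$. Take couplings $\chi_i$ with ancillas $\gamma_i$ realizing $T_s(\rho_i,\sigma_i)=T(\rho_i\otimes\gamma_i,\sigma_i\otimes\gamma_i)$ (the infimum is attained, e.g. at $\gamma_i=\tfrac{I_2}{2}$), and form the product coupling $\chi_1\otimes\chi_2$, which, after reordering tensor factors, is a coupling of $(\rho_1\otimes\rho_2)\otimes(\gamma_1\otimes\gamma_2)$ with $(\sigma_1\otimes\sigma_2)\otimes(\gamma_1\otimes\gamma_2)$. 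The crucial point is that the global SWAP factorizes as $S=S_1\otimes S_2$, so with $s_i=\tr[\chi_i S_i]\in[-1,1]$ one has $\tr[(\chi_1\otimes\chi_2)P_a]=\tfrac12(1-s_1s_2)$ while $T_s(\rho_i,\sigma_i)=\tfrac12(1-s_i)$; the desired bound then collapses to the elementary inequality $(1-s_1)(1-s_2)\ge0$, valid since each $s_i\le1$. Since $\gamma_1\otimes\gamma_2$ is an admissible ancilla in the infimum defining $T_s(\rho_1\otimes\rho_2,\sigma_1\otimes\sigma_2)$, this proves subadditivity of $T_s$, and tensoring incoherent minimizers, $\delta_\rho^\star\otimes\delta_\sigma^\star\in\mathcal{I}_{\mathcal{E}}$, finally gives $\tilde{T}(\rho\otimes\sigma)\le T_s(\rho,\delta_\rho^\star)+T_s(\sigma,\delta_\sigma^\star)=\tilde{T}(\rho)+\tilde{T}(\sigma)$. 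I expect the main obstacle to be the bookkeeping here: identifying the tensor factors so that the product coupling really is a coupling of the reordered marginals and the SWAP genuinely factorizes, and confirming that restricting the ancilla infimum to product ancillas $\gamma_1\otimes\gamma_2$ is legitimate.
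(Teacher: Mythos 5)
Your proof is correct, and for (B1), (B2) and (B4) it is essentially the paper's own argument: a symmetric-subspace coupling certifies $\tilde{T}=0$ for incoherent states, a support argument (symmetric/antisymmetric support forces equal marginals) gives faithfulness, monotonicity of $T_s$ under channels together with $\Psi(\mathcal{I}_{\mathcal{E}})\subseteq\mathcal{I}_{\mathcal{E}}$ gives (B2), and averaging feasible pairs of the program (\ref{s1}) gives (B4), which is exactly the paper's Theorem \ref{tcon}. The genuine divergence is subadditivity. The paper (Theorem \ref{subpro}) stays inside the SDP picture: it combines the optimizers $(X,Y)$ for $(\rho,\delta_1)$ and $(M,N)$ for $(\sigma,\delta_2)$ into a feasible pair for $(\rho\otimes\sigma,\delta_1\otimes\delta_2)$ and factorizes the gap as $\tfrac12\bigl(1+\tr[(X-Y)S]\bigr)\bigl(1+\tr[(M-N)S]\bigr)\ge 0$, whereas you pass to the coupling picture via $T_s(\rho,\sigma)=T(\rho\otimes\tfrac{I_2}{2},\sigma\otimes\tfrac{I_2}{2})$, tensor the optimal couplings, use $S=S_1\otimes S_2$, and reduce to $(1-s_1)(1-s_2)\ge0$. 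These are the same computation in two equivalent representations: a feasible pair $(X,Y)$ encodes a coupling with a qubit ancilla placed in a singlet or a symmetric state, which flips the sign of the swap expectation, hence the paper's $1+\tr[(X-Y)S]$ versus your $1-s$. What your route buys is sign safety: the product coupling pairs the symmetric and antisymmetric pieces correctly by construction, while in the SDP picture the feasible pair must be the crossed one, $X'=X\otimes N+Y\otimes M$, $Y'=X\otimes M+Y\otimes N$; the uncrossed pairing actually displayed in the paper's Eqs. (\ref{sub}) and (\ref{mnxy}) yields the value $\tfrac12\bigl(1+\tr[(X-Y)S]\,\tr[(M-N)S]\bigr)$, which equals $1$ rather than $0$ when both $\rho$ and $\sigma$ are incoherent, so the paper's intermediate formulas carry a sign slip (its final factorized expression is the correct one) that your derivation never encounters. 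What the paper's route buys is the absence of ancillas and tensor-factor reordering, and no need for your closing observation that restricting the ancilla infimum to product ancillas $\gamma_1\otimes\gamma_2$ is legitimate because it only weakens the infimum, which is indeed the direction required.
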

\begin{proof}
	For the property (B1), when $\rho$ is a state, due to the definition of $\tilde{T}_s,$ $\tilde{T}_s(\rho)\ge0.$ When $\rho=\sum_ip_i\ket{i}\bra{i}\in \mathcal{I}_{\mathcal{E}},$ then we could choose $\delta=\r,$ and $X_{AB}=0,Y_{AB}=\sum_i \sqrt{p_ip_j}\ket{ii}\bra{jj},$ 
	\begin{align*}
	\tilde{T}(\r)\le&
	tr[X_{AB}P_s(d)+Y_{AB}P_a(d)]\\
	=&\sum_{i,j}\sqrt{p_ip_j}\bra{ii}P_a(d)\ket{jj}\\
	=&0.
	\end{align*}
	As $T_{s}(\r,\s)$ is nonnegative, then $\tilde{T}(\rho)=0.$

	 Next we show $\tilde{T}(\cdot)$ is faithful. Let $\rho$ be a mixed state on the Hilbert space $\mathcal{H}$ with $\mathcal{T}_s(\rho)=0$, next let $\s$ be the optimal incoherent state for $\r$ in terms of $\tilde{T}(\cdot)$. Let $X$ and $Y$ be the optimal in terms of $\tilde{T}$ for the couple $(\rho,\sigma),$ as $X,Y\ge0$ and $tr(XP_s(d)+YP_a(d))=0$, then $tr(XP_s(d))=tr(YP_a(d))=0.$ Let $$A=\{\ket{\phi_i}|\ket{\phi_i}=\ket{ii},i=0,1,\cdots,d-1\},$$
	$$B=\{\ket{\phi^{-}_{ij}}|\ket{\phi_{ij}^{-}}=\frac{\ket{ij}-\ket{ji}}{\sqrt{2}},0\le i<j\le d-1.\},$$
	$$C=\{\ket{\phi_{ij}^{+}}|\ket{\phi_{ij}^{+}}=\frac{\ket{ij}+\ket{ji}}{\sqrt{2}},0\le i<j\le d-1\},$$
	then the sets $A\cup B\cup C$, $A\cup C$ and $B$ constitute the orthonormal base of the space $\mathcal{H}\otimes\mathcal{H}$, $Range(P_s(d)),$ and $Range(P_a(d)),$ respectively. As $tr(XP_s(d))=0,$ the state $\ket{\psi}$ is in the range of $X$ is in the asymmetric space, that is, 
	\begin{align*}
	X_{AB}=&\sum_{0\le i<j\le d-1}\sum_{0\le k<l\le d-1}\mu_{ijkl}\ket{\phi_{ij}^{-}}\bra{\phi_{kl}^{-}},\\ 
	X_A=&X_B.
	\end{align*}
	Similarly, $Y_A=Y_B.$ At last, due to the definition (\ref{s1}) of $T_s(\rho,\sigma)$, $\rho=\sigma$ is incoherent. We finish the proof of the faithfulness of $\tilde{T}(\cdot).$

	For the property (B2). In \cite{muller2022monotonicity}, the author showed that for any pair of quantum states $\rho,\s\in\mathcal{D}(\mathcal{H}),$ and any quantum channel $\Phi,$ 
	\begin{align}
	T_s(\Phi(\r),\Phi(\s))\le T_s(\r,\s). \label{m}
	\end{align}
	Assume $\Psi$ is any ICPTP, $\r$ is any quantum state, $\s$ is the optimal incoherent state in terms of $T_s(\r)$,
	\begin{align*}
	\tilde{T}(\rho)=&T_s(\r,\s)\\
	\ge& T(\Psi(\r),\Psi(\s))\\
	\ge&\tilde{T}(\Psi(\r)),
	\end{align*}
	here the first inequality is due to (\ref{m}). As $\s$ is an incoherent state, and $\Psi$ is ICPTP, $\Psi(\s)$ is incoherent, then combing the definition of $\tilde{T}(\Psi(\r)),$ we finish the proof of (B2).

	For the property $(B4)$ and subadditivity for product states, we place their proof in Theorem \ref{tcon} and Theorem \ref{subpro} of the Appendix \ref{app}, respectively . 
\end{proof}

\indent Next we show the analytical solutions of $\tilde{T}(\rho)$ when $\rho=\ket{\phi}\bra{\phi}$ is a pure state.
\begin{Theorem}\label{ps}
	Assume $\ket{\phi}=\sum_{i=0}^{d-1}{\lambda_i}\ket{i}$ is a pure state, $|\lambda_0|\ge|\lambda_1|\ge\cdots\ge|\lambda_{d-1}|\ge0$ and $\sum_{i=0}^{d-1}|\lambda_i|^2=1,$ then
	\begin{align}
	\tilde{T}(\ket{\phi})=\frac{1-|\lambda_0|^2}{2}.
	\end{align}
	Moreover, it owns the following relationship with the geometric measure of coherence
	\begin{align}
\tilde{T}(\ket{\phi})=\frac{C_g(\ket{\phi})}{2}
	\end{align}
\end{Theorem}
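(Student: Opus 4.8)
The plan is to pin down $\tilde{T}(\ket{\phi})=(1-|\lambda_0|^2)/2$ by matching an upper and a lower bound on the program (\ref{s1}), after which the second identity is automatic: the diagonal entries of $\ket{\phi}\bra{\phi}$ in the basis $\mathcal{I}_{\mathcal{E}}$ are $|\lambda_i|^2$ with $|\lambda_0|^2$ the largest, so (\ref{g}) gives $C_g(\ket{\phi})=1-|\lambda_0|^2$ and hence $\tilde{T}(\ket{\phi})=C_g(\ket{\phi})/2$ follows at once from the first equality.

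For the upper bound I would exhibit a single feasible point of $\tilde{T}(\ket{\phi})=\min_{\delta}T_s(\ket{\phi}\bra{\phi},\delta)$. Take the incoherent state $\delta=\ket{0}\bra{0}$ and set $X_{AB}=0$, $Y_{AB}=\ket{\phi}\bra{\phi}\otimes\ket{0}\bra{0}$. Then $X_A+Y_A=\ket{\phi}\bra{\phi}$ and $X_B+Y_B=\ket{0}\bra{0}$, so the constraints of (\ref{s1}) hold, and the objective reduces to $\tr(Y_{AB}P_a)=\tfrac12\big(1-\bra{\phi,0}S\ket{\phi,0}\big)$. Since $S\ket{\phi,0}=\ket{0,\phi}$ the overlap equals $|\langle\phi|0\rangle|^2=|\lambda_0|^2$, yielding objective value $(1-|\lambda_0|^2)/2$ and therefore $\tilde{T}(\ket{\phi})\le(1-|\lambda_0|^2)/2$.

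The substance is the matching lower bound, which I would prove directly from (\ref{s1}) for an arbitrary incoherent $\delta=\sum_j q_j\ket{j}\bra{j}$, using a rigidity forced by the rank-one marginal. If $X_{AB},Y_{AB}\ge0$ satisfy $X_A+Y_A=\ket{\phi}\bra{\phi}$, then $0\le Y_A\le\ket{\phi}\bra{\phi}$, so $Y_A\propto\ket{\phi}\bra{\phi}$; because its $A$-marginal has rank one, $Y_{AB}$ must be a product $Y_{AB}=\ket{\phi}\bra{\phi}\otimes\tau$ with $\tau=Y_B\le\delta$, and likewise $X_{AB}=\ket{\phi}\bra{\phi}\otimes\tau'$ with $\tau'\le\delta$. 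Using $\tr[(\ket{\phi}\bra{\phi}\otimes\tau)S]=\bra{\phi}\tau\ket{\phi}$ together with $\tau\le\delta$ and $\bra{\phi}\delta\ket{\phi}=\sum_j q_j|\lambda_j|^2\le|\lambda_0|^2$, I get $\tr(Y_{AB}S)\le|\lambda_0|^2$ and $\tr(X_{AB}S)\ge0$. Writing the objective of (\ref{s1}) as $\tfrac12\big[1+\tr(X_{AB}S)-\tr(Y_{AB}S)\big]$ then gives $T_s(\ket{\phi}\bra{\phi},\delta)\ge(1-|\lambda_0|^2)/2$ for every incoherent $\delta$, whence $\tilde{T}(\ket{\phi})\ge(1-|\lambda_0|^2)/2$; combining with the upper bound closes the proof.

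I expect the main obstacle to be justifying the product structure $Y_{AB}=\ket{\phi}\bra{\phi}\otimes\tau$ cleanly, i.e.\ arguing that a positive semidefinite bipartite operator whose $A$-marginal is proportional to the rank-one $\ket{\phi}\bra{\phi}$ is supported on $\ket{\phi}_A\otimes\mathcal{H}_B$ (equivalently, every pure state in its range is a product on the $A$ side). Once this rigidity is established, evaluating $\tr(\cdot\,S)$ as an overlap and bounding it by $\max_j|\lambda_j|^2=|\lambda_0|^2$ is routine, so the decisive step is the structural one rather than any computation.
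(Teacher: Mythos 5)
Your proposal is correct and follows essentially the same route as the paper: the rank-one marginal rigidity you flag as the key structural step is exactly the paper's Lemma~\ref{rank}, and after that both arguments evaluate the objective as $\tfrac12\bigl[1+\mathrm{tr}(X_{AB}S)-\mathrm{tr}(Y_{AB}S)\bigr]$ and bound the SWAP overlap by $\max_j|\lambda_j|^2=|\lambda_0|^2$. Your explicit upper-bound/lower-bound organization is a cleaner write-up of what the paper does by analyzing the optimal $(\sigma,X_{AB},Y_{AB})$ directly, but the mathematical content is the same.
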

\begin{proof} 
	Let $\sigma=\sum_kq_k\ket{k}\bra{k}$ be the optimal incoherent state for $\rho$ in terms of $\tilde{T}_s(\cdot)$. Let $X_{AB}$ and $Y_{AB}$ be the optimal in terms of (\ref{s1}) for the couple $(\rho,\sigma),$ that is, 
\begin{align*}
T_s(\rho,\sigma)=&\min_{\sigma\in \mathcal{I}_{\mathcal{E}}}[tr(X_{AB}P_s(d)+Y_{AB}P_a(d))]\\
\emph{s.t.}\hspace{5mm}& X_A+Y_A=\rho, \hspace{3mm} X_B+Y_B=\sigma\\
&X_{AB},Y_{AB}\ge 0.
\end{align*}
As $\rho$ is a pure state, $X_A$ and $Y_A$ are semipositive definite with ranks 1, $X_A=m\ket{\phi}\bra{\phi},$ $Y_A=n\ket{\phi}\bra{\phi},$ $m,n\ge0$ and $m+n=1.$ Next due to Lemma \ref{rank}, $X_{AB}=m\ket{\phi}\bra{\phi}\otimes X_B$, $Y_{AB}=n\ket{\phi}\bra{\phi}\otimes Y_B,$ then 
\begin{align}
&tr(X_{AB}P_s(d)+Y_{AB}P_a(d))\nonumber\\
=&\frac{tr[X_{AB}+Y_{AB}+F(X_{AB}-Y_{AB})]}{2}\nonumber\\
=&\frac{tr(1+F(X_{AB}-Y_{AB}))}{2}\nonumber\\
=&\frac{tr(1+F(\ket{\phi}\bra{\phi}\otimes(mX_B-nY_B)))}{2},\nonumber\\
=&\frac{1+m\bra{\phi}X_B\ket{\phi}-n\bra{\phi}Y_B\ket{\phi}}{2},\nonumber
\end{align}
then 
\begin{align}
\tilde{T}(\ket{\phi})=&\min[\frac{1+m\bra{\phi}X_B\ket{\phi}-n\bra{\phi}Y_B\ket{\phi}}{2}],\nonumber\\
=&\frac{1-|\lambda_0|^2}{2}.\label{p3}
\end{align}
Due to that $X_B,Y_B\ge 0$ and $m,n\ge 0$, we choose $X_B=0,m=0,$ $n=1$ and $Y_B=\sigma_B=\ket{0}\bra{0}$ in the second equality.

\indent Next we have the following relations between $\tilde{T}(\ket{\phi})$ and the geometric measure of coherence (\ref{geo}), 
\begin{align}
\tilde{T}(\ket{\phi})=&\frac{1-|\lambda_0|^2}{2}\nonumber\\
=&\frac{C_g(\ket{\phi})}{2},
\end{align}
the second equality is due to the formula $(\ref{g})$, that is, $C_g(\ket{\phi})=2\tilde{T}(\ket{\phi}).$
\end{proof}

Then we show that $\tilde{T}(\cdot)$ does not satisfy the property (B3) by showing the propery (C3) is invalid for $\tilde{T}(\cdot).$ Assume $\rho_1=\frac{1}{2}(\ket{0}+\ket{1})(\bra{0}+\bra{1}),$ $\rho_2=\frac{1}{3}(\ket{2}+\ket{3}+\ket{4})(\bra{2}+\bra{3}+\bra{4}),$ and $\rho=\frac{1}{2}\rho_1\oplus\frac{1}{2}\rho_2.$ By Theorem \ref{ps}, we have $\tilde{T}(\rho_1)=\frac{1}{4},$ $\tilde{T}(\r_2)=\frac{1}{3},$ $\frac{1}{2}\tilde{T}(\rho_1)+\frac{1}{2}\tilde{T}(\rho_2)=\frac{7}{24},$ next when we take
\begin{align*}
\delta=&\frac{1}{2}\ket{0}\bra{0}+\frac{1}{2}\ket{1}\bra{1}\\ 
X_{AB}=&\frac{1}{2}\ket{\psi_2}\bra{\psi_2}\otimes\ket{\psi_1^{-}}\bra{\psi_1^{-}},\\
Y_{AB}=&\frac{1}{2}\ket{\psi_1^{+}}\bra{\psi_1^{+}}\otimes\ket{\psi_1^{+}}\bra{\psi_1^{+}}, \\
\ket{\psi_1^{-}}=&\frac{1}{\sqrt{2}}(\ket{0}-\ket{1}),\\
\ket{\psi_1^{+}}=&\frac{1}{\sqrt{2}}(\ket{0}+\ket{1}),\\
\ket{\psi_2}=&\frac{1}{\sqrt{3}}(\ket{2}+\ket{3}+\ket{4})
\end{align*} 
\begin{align*}
\tilde{T}(\rho)\le& T_s(\r,\delta)\nonumber\\
\le&\frac{1+\mathrm{tr}(S(X-Y))}{2}\\
=&\frac{1}{4}<\frac{7}{24},
\end{align*}
Hence $\tilde{T}(\rho)\ne \frac{1}{2}\tilde{T}(\rho_1)+\frac{1}{2}\tilde{T}(\rho_2)$, $\tilde{T}(\cdot)$ does not satisfy $(B3).$ 

To remedy the flaw, we would propose a coherence measure based on the convex roof extended method \cite{du2015coherence} for mixed states, then its coherence measure based on the revised transport cost is 
\begin{align}
T(\rho)=\min_{\{p_i,\ket{\phi_i}\}}\sum_i p_i\tilde{T}(\ket{\phi_i}),\label{cm}
\end{align}
where the minimum takes over all the decompositions of $\r=\sum_i p_i\ket{\phi_i}\bra{\phi_i}.$
Based on Theorem \ref{ps} and the result in \cite{du2015coherence}, we have $T(\r)$ satisfies (B1), (B3),(B4). And (B3)-(B4) can lead to (B2). Hence $T(\cdot)$ is a proper coherence measure.

\indent At last, we show the relationship between the measure and the  quantum speed limit. Here we consider a pair of pure states $\ket{\psi}$ and $\ket{\phi}$ with $\bra{\phi}\psi\rangle\in\mathbb{R}$, this is always possible, as $\ket{\psi}$ and $e^{-i\theta}\ket{\psi}$ are equivalent. Recently, Rudnicki studied the relationship between quantum speed limits and quantum entanglement measures \cite{rudnicki2021quantum}.  There the author showed that when the Hamiltonian $H^{'}$ is time-independent, 
\begin{align}
H^{'}=-i\hbar\omega(\ket{\psi}\bra{\bar{\psi}}-\ket{\bar{\psi}}\bra{\psi}),
\end{align}
here $\ket{\bar{\psi}}=\frac{\ket{\phi}-\bra{\psi}\phi\rangle\ket{\psi}}{\sqrt{1-|\bra{\phi}\psi\rangle|^2}},$
which is orthogonal to $\ket{\psi}.$ the bound (\ref{qst}) is saturated \cite{rudnicki2021quantum}. And when we denote $\tau(\psi,\phi)=\triangle t$, 
\begin{align}
\tau(\psi,\phi)=\frac{\arccos(\bra{\phi}\psi\rangle)}{\omega}.\label{lt}
\end{align}

Here we can pose a question, for a given pure state $\ket{\psi}$, how much time $\tau(\psi)$ does a unitary evolution cost by turning $\ket{\psi}$ into an incoherent state? That is,
\begin{align*}
\tau(\ket{\psi})=\frac{1}{\omega}\min_{\phi\in \mathcal{I}_{\mathcal{E}}}\arccos(|\bra{\psi}\phi\rangle|).
\end{align*} 
By the above analysis, the bound $(\ref{qst})$ is saturated when taking $H=H^{'}.$ Next assume $\ket{\psi}=\sum_{i=0}^{d-1}{\lambda_i}\ket{i}$, $|\lambda_0|\ge|\lambda_1|\ge\cdots\ge|\lambda_{d-1}|\ge0$ and $\sum_{i=0}^{d-1}|\lambda_i|^2=1,$ 
\begin{align*}
\min_{\phi\in \mathcal{I}_{\mathcal{E}}}\arccos(|\bra{\psi}\phi\rangle|)=&\arccos(\max_{\phi\in\mathcal{I}_{\mathcal{E}}}\bra{\psi}\phi\rangle)\nonumber\\
=&\arccos(|\lambda_0|)\nonumber\\
=&\arcsin(\sqrt{1-|\lambda_0|^2}),
\end{align*}

that is,
\begin{align}
\tau(\ket{\psi})=\frac{1}{\omega}\arcsin(\sqrt{1-|\lambda_0|^2}).\label{cq}
\end{align}
At last, by combing the formula (\ref{cq}) and Theorem \ref{ps}, we obtain the following theorem.
\begin{Theorem}
	Assume $\ket{\psi}$ is a pure state, then the minimal time $\tau(\ket{\psi})$ required to transform $\ket{\psi}$ into an incoherent state under a unitary evolution is 
	\begin{align}
	\tau(\ket{\psi})=\frac{1}{\omega}\arcsin(\sqrt{2T(\ket{\psi})}).
	\end{align}
\end{Theorem}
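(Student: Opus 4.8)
The plan is to read this theorem as a bridge between two results already established, so that essentially no new machinery is needed. On one side sits Theorem \ref{ps}, which gives the closed form $\tilde{T}(\ket{\phi})=\frac{1-|\lambda_0|^2}{2}$ for a pure state; on the other sits the optimal evolution time (\ref{cq}) extracted from the saturated Mandelstam--Tamm bound. Both expressions are organized around the same quantity $1-|\lambda_0|^2$, so the entire task is to substitute one into the other after identifying $T(\ket{\psi})$ with $\tilde{T}(\ket{\psi})$.

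First I would check that on pure states the convex-roof measure $T$ collapses to the quantifier $\tilde{T}$. Since $\ket{\psi}\bra{\psi}$ is rank one, every pure-state decomposition $\ket{\psi}\bra{\psi}=\sum_i p_i\ket{\phi_i}\bra{\phi_i}$ forces each $\ket{\phi_i}$ to equal $\ket{\psi}$ up to a global phase, so $\sum_i p_i\tilde{T}(\ket{\phi_i})=\tilde{T}(\ket{\psi})$ for every admissible decomposition and the infimum in (\ref{cm}) is attained trivially. Hence $T(\ket{\psi})=\tilde{T}(\ket{\psi})$, and combining this with Theorem \ref{ps} gives $T(\ket{\psi})=\frac{1-|\lambda_0|^2}{2}$, equivalently $1-|\lambda_0|^2=2T(\ket{\psi})$.

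Next I would invoke the speed-limit computation leading to (\ref{cq}): optimizing over the incoherent target reduces to maximizing the overlap $|\bra{\psi}\phi\rangle|$ over $\phi\in\mathcal{I}_{\mathcal{E}}$, and the ordering $|\lambda_0|\ge\cdots\ge|\lambda_{d-1}|$ makes $\ket{0}$ the best target, so the maximal overlap is $|\lambda_0|$; with the time-independent Hamiltonian $H'$ saturating (\ref{qst}), this yields $\tau(\ket{\psi})=\frac{1}{\omega}\arcsin(\sqrt{1-|\lambda_0|^2})$. Substituting $1-|\lambda_0|^2=2T(\ket{\psi})$ then gives $\tau(\ket{\psi})=\frac{1}{\omega}\arcsin(\sqrt{2T(\ket{\psi})})$, which is the claim. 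I do not expect a genuine obstacle in this final assembly, as it is a substitution; the only step deserving care is the identification $T(\ket{\psi})=\tilde{T}(\ket{\psi})$, which rests on the rigidity of rank-one decompositions rather than on any optimization. The substantive content, namely the saturation of (\ref{qst}) by $H'$ and the evaluation of the optimal overlap as $|\lambda_0|$, has already been carried out in the preceding discussion, so I would simply cite it.
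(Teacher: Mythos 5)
Your proposal is correct and follows the paper's own route exactly: the paper obtains this theorem by combining formula (\ref{cq}) with Theorem \ref{ps}, precisely the substitution you perform. Your explicit verification that the convex roof (\ref{cm}) collapses to $\tilde{T}$ on pure states (via rigidity of rank-one decompositions) is left implicit in the paper, but it is the same argument, just stated with more care.
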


When $\ket{\psi}$ is incoherent, $T(\ket{\psi})=0,$ $\tau(\ket{\psi})=0,$ otherwise, due to the faithfulness of $T(\cdot)$, $\tau(\ket{\psi})$ cannot be 0. Furthermore, as $T(\cdot)$ is monotone under the incoherent operations for pure states, $\tau(\ket{\psi_1})\ge \tau(\ket{\psi_2})$ when $\ket{\psi_1}$ can be turned into $\ket{\psi_2}$ under incoherent operations.
  
\section{Conclusion}
\indent In this article, we studied a coherence quantifier based on the quantum transort cost. First we showed the quantity satisfies (B1), (B2), (B4), and it is faithful and subadditivity for product states. Based on the analytical solutions of the quantifier for pure states, we proposed an example showing that $\tilde{T}(\cdot)$ does not satisfies (B3). To remedy the flaw, we proposed a coherence measure based on the convex roof extended method for mixed states. At last, we obtained a close relation between the coherence measure and the minimal time necessary to evolve to an incoherent state for a pure state. Due to the importance of the study on the quantum coherence , our results can provide a reference for future work on the study of quantum coherence.

\section{Acknowledgement}
X. S. was supported by the Fundamental Research Funds for the Central Universities (Grant No. ZY2306), Funds of College of Information Science and Technology, Beijing University of Chemical Technology (Grant No. 0104/11170044115), and the National Natural Science Foundation of China (Grant No. 12301580).

\bibliographystyle{IEEEtran}
\bibliography{ref}

\begin{thebibliography}{10}
\providecommand{\url}[1]{#1}
\csname url@samestyle\endcsname
\providecommand{\newblock}{\relax}
\providecommand{\bibinfo}[2]{#2}
\providecommand{\BIBentrySTDinterwordspacing}{\spaceskip=0pt\relax}
\providecommand{\BIBentryALTinterwordstretchfactor}{4}
\providecommand{\BIBentryALTinterwordspacing}{\spaceskip=\fontdimen2\font plus
\BIBentryALTinterwordstretchfactor\fontdimen3\font minus
  \fontdimen4\font\relax}
\providecommand{\BIBforeignlanguage}[2]{{%
\expandafter\ifx\csname l@#1\endcsname\relax
\typeout{** WARNING: IEEEtran.bst: No hyphenation pattern has been}%
\typeout{** loaded for the language `#1'. Using the pattern for}%
\typeout{** the default language instead.}%
\else
\language=\csname l@#1\endcsname
\fi
#2}}
\providecommand{\BIBdecl}{\relax}
\BIBdecl

\bibitem{streltsov2017}
A.~Streltsov, G.~Adesso, and P.~M. B., ``Colloquium: Quantum coherence as a
  resource,'' \emph{Reviews of modern physics}, vol.~89, no.~4, p. 041003,
  2017.

\bibitem{horodecki2009quantum}
R.~Horodecki, P.~Horodecki, M.~Horodecki, and K.~Horodecki, ``Quantum
  entanglement,'' \emph{Reviews of modern physics}, vol.~81, no.~2, p. 865,
  2009.

\bibitem{brunner2014bell}
N.~Brunner, D.~Cavalcanti, S.~Pironio, V.~Scarani, and S.~Wehner, ``Bell
  nonlocality,'' \emph{Reviews of modern physics}, vol.~86, no.~2, p. 419,
  2014.

\bibitem{shi2017coherence}
H.-L. Shi, S.-Y. Liu, X.-H. Wang, W.-L. Yang, Z.-Y. Yang, and H.~Fan,
  ``Coherence depletion in the grover quantum search algorithm,''
  \emph{Physical Review A}, vol.~95, no.~3, p. 032307, 2017.

\bibitem{rastegin2018role}
A.~E. Rastegin, ``On the role of dealing with quantum coherence in amplitude
  amplification,'' \emph{Quantum Information Processing}, vol.~17, pp. 1--23,
  2018.

\bibitem{ahnefeld2022coherence}
F.~Ahnefeld, T.~Theurer, D.~Egloff, J.~M. Matera, and M.~B. Plenio, ``Coherence
  as a resource for shor’s algorithm,'' \emph{Physical Review Letters}, vol.
  129, no.~12, p. 120501, 2022.

\bibitem{naseri2022entanglement}
M.~Naseri, T.~V. Kondra, S.~Goswami, M.~Fellous-Asiani, and A.~Streltsov,
  ``Entanglement and coherence in the bernstein-vazirani algorithm,''
  \emph{Physical Review A}, vol. 106, no.~6, p. 062429, 2022.

\bibitem{ye2023tsallis}
L.~Ye, Z.~Wu, and S.-M. Fei, ``Tsallis relative $\alpha$ entropy of coherence
  dynamics in grover's search algorithm,'' \emph{Communications in Theoretical
  Physics}, 2023.

\bibitem{li2023evolution}
M.~Li and X.~Shi, ``Evolution of quantum resources in quantum-walk-based search
  algorithm,'' \emph{Results in Physics}, p. 107029, 2023.

\bibitem{aaberg2014catalytic}
J.~Aberg, ``Catalytic coherence,'' \emph{Physical review letters}, vol. 113,
  no.~15, p. 150402, 2014.

\bibitem{narasimhachar2015low}
V.~Narasimhachar and G.~Gour, ``Low-temperature thermodynamics with quantum
  coherence,'' \emph{Nature communications}, vol.~6, no.~1, p. 7689, 2015.

\bibitem{lostaglio2015description}
M.~Lostaglio, D.~Jennings, and T.~Rudolph, ``Description of quantum coherence
  in thermodynamic processes requires constraints beyond free energy,''
  \emph{Nature communications}, vol.~6, no.~1, p. 6383, 2015.

\bibitem{witt2013stationary}
B.~Witt and F.~Mintert, ``Stationary quantum coherence and transport in
  disordered networks,'' \emph{New Journal of Physics}, vol.~15, no.~9, p.
  093020, 2013.

\bibitem{plenio2008dephasing}
M.~B. Plenio and S.~F. Huelga, ``Dephasing-assisted transport: quantum networks
  and biomolecules,'' \emph{New Journal of Physics}, vol.~10, no.~11, p.
  113019, 2008.

\bibitem{huelga2013vibrations}
S.~F. Huelga and M.~B. Plenio, ``Vibrations, quanta and biology,''
  \emph{Contemporary Physics}, vol.~54, no.~4, pp. 181--207, 2013.

\bibitem{baumgratz2014quantifying}
T.~Baumgratz, M.~Cramer, and M.~B. Plenio, ``Quantifying coherence,''
  \emph{Physical review letters}, vol. 113, no.~14, p. 140401, 2014.

\bibitem{du2015coherence}
S.~Du, Z.~Bai, and X.~Qi, ``Coherence measures and optimal conversion for
  coherent states,'' \emph{Quantum Information \& Computation}, vol.~15, no.
  15-16, pp. 1307--1316, 2015.

\bibitem{napoli2016robustness}
C.~Napoli, T.~R. Bromley, M.~Cianciaruso, M.~Piani, N.~Johnston, and G.~Adesso,
  ``Robustness of coherence: an operational and observable measure of quantum
  coherence,'' \emph{Physical review letters}, vol. 116, no.~15, p. 150502,
  2016.

\bibitem{rana2016trace}
S.~Rana, P.~Parashar, and M.~Lewenstein, ``Trace-distance measure of
  coherence,'' \emph{Physical Review A}, vol.~93, no.~1, p. 012110, 2016.

\bibitem{rastegin2016}
A.~E. Rastegin, ``Quantum-coherence quantifiers based on the tsallis relative
  $\ensuremath{\alpha}$ entropies,'' \emph{Physical Review A}, vol.~93, p.
  032136, 2016.

\bibitem{li2021quantum}
L.~Li, Q.-W. Wang, S.-Q. Shen, and M.~Li, ``Quantum coherence measures based on
  fisher information with applications,'' \emph{Physical Review A}, vol. 103,
  no.~1, p. 012401, 2021.

\bibitem{yu2022quantifying}
D.-h. Yu and C.-s. Yu, ``Quantifying coherence in terms of fisher
  information,'' \emph{Physical Review A}, vol. 106, no.~5, p. 052432, 2022.

\bibitem{streltsov2015measuring}
A.~Streltsov, U.~Singh, H.~S. Dhar, M.~N. Bera, and G.~Adesso, ``Measuring
  quantum coherence with entanglement,'' \emph{Physical review letters}, vol.
  115, no.~2, p. 020403, 2015.

\bibitem{zhu2017operational}
H.~Zhu, Z.~Ma, Z.~Cao, S.-M. Fei, and V.~Vedral, ``Operational one-to-one
  mapping between coherence and entanglement measures,'' \emph{Physical Review
  A}, vol.~96, no.~3, p. 032316, 2017.

\bibitem{tan2018coherence}
K.~C. Tan, S.~Choi, H.~Kwon, and H.~Jeong, ``Coherence, quantum fisher
  information, superradiance, and entanglement as interconvertible resources,''
  \emph{Physical Review A}, vol.~97, no.~5, p. 052304, 2018.

\bibitem{Kim2023partial}
S.~Kim, C.~Xiong, S.~Luo, A.~Kumar, and J.~Wu, ``Partial coherence versus
  entanglement,'' \emph{Physical Review A}, vol. 108, p. 012416, 2023.

\bibitem{ma2016converting}
J.~Ma, B.~Yadin, D.~Girolami, V.~Vedral, and M.~Gu, ``Converting coherence to
  quantum correlations,'' \emph{Physical review letters}, vol. 116, no.~16, p.
  160407, 2016.

\bibitem{Xiong2019partial}
C.~Xiong, A.~Kumar, M.~Huang, S.~Das, U.~Sen, and J.~Wu, ``Partial coherence
  and quantum correlation with fidelity and affinity distances,'' \emph{Phys.
  Rev. A}, vol.~99, p. 032305, 2019.

\bibitem{budiyono2023quantifying}
A.~Budiyono and H.~K. Dipojono, ``Quantifying quantum coherence via
  kirkwood-dirac quasiprobability,'' \emph{Physical Review A}, vol. 107, no.~2,
  p. 022408, 2023.

\bibitem{de2021quantum}
G.~De~Palma, M.~Marvian, D.~Trevisan, and S.~Lloyd, ``The quantum wasserstein
  distance of order 1,'' \emph{IEEE Transactions on Information Theory},
  vol.~67, no.~10, pp. 6627--6643, 2021.

\bibitem{friedland2022quantum}
S.~Friedland, M.~Eckstein, S.~Cole, and K.~{\.Z}yczkowski, ``Quantum
  monge-kantorovich problem and transport distance between density matrices,''
  \emph{Physical Review Letters}, vol. 129, no.~11, p. 110402, 2022.

\bibitem{muller2022monotonicity}
A.~M{\"u}ller-Hermes, ``On the monotonicity of a quantum optimal transport
  cost,'' \emph{arXiv preprint arXiv:2211.11713}, 2022.

\bibitem{bistron2023monotonicity}
R.~Bistro{\'n}, M.~Eckstein, and K.~{\.Z}yczkowski, ``Monotonicity of a quantum
  2-wasserstein distance,'' \emph{Journal of Physics A: Mathematical and
  Theoretical}, vol.~56, no.~9, p. 095301, 2023.

\bibitem{palma2021quantum}
G.~De~Palma and D.~Trevisan, ``Quantum optimal transport with quantum
  channels,'' in \emph{Annales Henri Poincar{\'e}}, vol.~22.\hskip 1em plus
  0.5em minus 0.4em\relax Springer, 2021, pp. 3199--3234.

\bibitem{chakrabarti2019quantum}
S.~Chakrabarti, H.~Yiming, T.~Li, S.~Feizi, and X.~Wu, ``Quantum wasserstein
  generative adversarial networks,'' \emph{Advances in Neural Information
  Processing Systems}, vol.~32, 2019.

\bibitem{kiani2022learning}
B.~T. Kiani, G.~De~Palma, M.~Marvian, Z.-W. Liu, and S.~Lloyd, ``Learning
  quantum data with the quantum earth mover’s distance,'' \emph{Quantum
  Science and Technology}, vol.~7, no.~4, p. 045002, 2022.

\bibitem{de2023limitations}
G.~De~Palma, M.~Marvian, C.~Rouz{\'e}, and D.~S. Franca, ``Limitations of
  variational quantum algorithms: a quantum optimal transport approach,''
  \emph{PRX Quantum}, vol.~4, no.~1, p. 010309, 2023.

\bibitem{toth2023quantum}
G.~T{\'o}th and J.~Pitrik, ``Quantum wasserstein distance based on an
  optimization over separable states,'' \emph{Quantum}, vol.~7, p. 1143, 2023.

\bibitem{deffner2017quantum}
S.~Deffner and S.~Campbell, ``Quantum speed limits: from heisenberg’s
  uncertainty principle to optimal quantum control,'' \emph{Journal of Physics
  A: Mathematical and Theoretical}, vol.~50, no.~45, p. 453001, 2017.

\bibitem{caglioti2020quantum}
E.~Caglioti, F.~Golse, and T.~Paul, ``Quantum optimal transport is cheaper,''
  \emph{Journal of Statistical Physics}, vol. 181, pp. 149--162, 2020.

\bibitem{marvian2016quantum}
I.~Marvian, R.~W. Spekkens, and P.~Zanardi, ``Quantum speed limits, coherence,
  and asymmetry,'' \emph{Physical Review A}, vol.~93, no.~5, p. 052331, 2016.

\bibitem{rudnicki2021quantum}
L.~Rudnicki, ``Quantum speed limit and geometric measure of entanglement,''
  \emph{Physical Review A}, vol. 104, no.~3, p. 032417, 2021.

\bibitem{seddon2021quantifying}
J.~R. Seddon, B.~Regula, H.~Pashayan, Y.~Ouyang, and E.~T. Campbell,
  ``Quantifying quantum speedups: Improved classical simulation from tighter
  magic monotones,'' \emph{PRX Quantum}, vol.~2, no.~1, p. 010345, 2021.

\bibitem{campaioli2022resource}
F.~Campaioli, C.-s. Yu, F.~A. Pollock, and K.~Modi, ``Resource speed limits:
  maximal rate of resource variation,'' \emph{New Journal of Physics}, vol.~24,
  no.~6, p. 065001, 2022.

\bibitem{mohan2022quantum}
B.~Mohan, S.~Das, and A.~K. Pati, ``Quantum speed limits for information and
  coherence,'' \emph{New Journal of Physics}, vol.~24, no.~6, p. 065003, 2022.

\bibitem{muthuganesan2022affinity}
R.~Muthuganesan and S.~Balakrishnan, ``Affinity-based geometric discord and
  quantum speed limits of its creation and decay,'' \emph{Physica Scripta},
  vol.~97, no.~12, p. 124003, 2022.

\bibitem{maleki2023speed}
Y.~Maleki, B.~Ahansaz, and A.~Maleki, ``Speed limit of quantum metrology,''
  \emph{Scientific Reports}, vol.~13, no.~1, p. 12031, 2023.

\bibitem{mandelstam1945quantum}
L.~Mandelstam and I.~Tamm, ``Quantum speed limits: From heisenberg’s
  uncertainty principle to optimal quantum control,'' \emph{J. Phys. USSR},
  vol.~9, p. 249, 1945.

\bibitem{campaioli2020tightening}
F.~Campaioli, ``Tightening time-energy uncertainty relations,'' \emph{arXiv
  preprint arXiv:2004.08384}, 2020.

\bibitem{yu2016alternative}
X.-D. Yu, D.-J. Zhang, G.~Xu, and D.~Tong, ``Alternative framework for
  quantifying coherence,'' \emph{Physical Review A}, vol.~94, no.~6, p. 060302,
  2016.

\bibitem{john13}
J.~Watrous, ``Simpler semidefinite programs for completely bounded norms,''
  \emph{Chicago Journal of Theoretical Computer Science}, vol. 2013.

\end{thebibliography}
\section{Appendix}\label{app}
Assume $\rho_A$ and $\sigma_B$ are two quantum states in $\mathcal{D}(\mathcal{H}_A)$ and $\mathcal{D}(\mathcal{H}_B)$, respectively, then we can write the revised definition as follows.  
\begin{align}
T_s(\rho_A,\sigma_B)=&\sup\hspace{3mm} tr(\rho_AH_1+\sigma_B H_2) \label{tsa}\\
\textit{s. t.}\hspace{3mm}&  P_s-H_1\otimes I-I\otimes H_2\ge 0\nonumber\\
&  P_{as}-H_1\otimes I-I\otimes H_2\ge 0\nonumber\\
&  H_1\in \mathbb{H}(\mathcal{H}_A), H_2\in \mathbb{H}(\mathcal{H}_B),\nonumber
\end{align}
\begin{proof}
	Here the Lagrangian can be written as
	\begin{align*}
	tr[&X_{AB}P_s(d)+Y_{AB}P_{as}(d)]\nonumber\\
	+&tr[(\rho-X_A-Y_A)H_1]+tr[(\sigma-X_B-Y_B)H_2]\nonumber\\
	=tr[&X_{AB}(P_s(d)-H_1\otimes I-I\otimes H_2)]\nonumber\\
	+&tr[Y_{AB}(P_{as}-H_1\otimes I-I\otimes H_2)]\nonumber\\+&tr[\rho H_1+\sigma H_2],
	\end{align*} 
	hence the dual SDP is 
	\begin{align*}
	\sup \hspace{3mm}&tr[\rho H_1+\sigma H_2]\nonumber\\
	\textit{s. t.} \hspace{3mm}& P_s-H_1\otimes I-I\otimes H_2\ge 0\\
	&P_{as}-H_1\otimes I-I\otimes H_2\ge 0.
	\end{align*}
	
	Then we show strong duality holds for the semidefinite program $(\ref{s1}).$ Note that $(X_{AB}=0, Y_{AB}=\rho\otimes\sigma)$ is a feasible solution to the primal program (\ref{tsa}). For the dual program $(\ref{s1})$, when $H_1=H_2=-I,$ then $P_s-H_1\otimes I-I\otimes H_2=P_s+2I_A\otimes I_B,$ as for any $\ket{v}\in \mathcal{H}_{A}\otimes\mathcal{H}_B,$ $\bra{v}(P_s+2I_A\otimes I_B)\ket{v}=2+\bra{v}P_s\ket{v}\ge 2>0,$ that is, $P_s-H_1\otimes I-I\otimes H_2\ge0,$ similarly, $P_{as}-H_1\otimes I-I\otimes H_2\ge 0.$ Since a strictly feasible solution
	exists to the dual program and the primal feasible set is non-empty, the Slater’s conditions are satisfied, hence, we finish the proof  \cite{john13}.
\end{proof}

\begin{Theorem} \label{tcon}
	Assume $\rho=\sum_{i=1}^d p_i\rho_i,$ then
	\begin{align*}
	\tilde{T}(\rho)\le \sum_{i=1}^d p_i\tilde{T}(\rho_i).
	\end{align*} 
\end{Theorem}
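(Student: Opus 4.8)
The plan is to exploit the semidefinite-program structure of $T_s$ and the fact that its objective is linear in the pair $(X_{AB},Y_{AB})$. Convexity under mixing will then follow by assembling a single feasible point for the mixture $\rho$ out of the optimal points of the individual states $\rho_i$, so that the value of the objective at this averaged point is exactly $\sum_i p_i\tilde{T}(\rho_i)$.

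Concretely, I would first fix, for each $i$, an optimal incoherent state $\delta_i\in\mathcal{I}_{\mathcal{E}}$ together with optimal operators $X^{(i)}_{AB},Y^{(i)}_{AB}\ge0$ realizing $\tilde{T}(\rho_i)=T_s(\rho_i,\delta_i)$ in the program (\ref{s1}); thus $X^{(i)}_A+Y^{(i)}_A=\rho_i$ and $X^{(i)}_B+Y^{(i)}_B=\delta_i$. Then set $\delta=\sum_i p_i\delta_i$, $X_{AB}=\sum_i p_i X^{(i)}_{AB}$, and $Y_{AB}=\sum_i p_i Y^{(i)}_{AB}$. The verifications are that $\delta$ is again incoherent (a convex combination of diagonal states is diagonal), that $X_{AB},Y_{AB}\ge0$ (a convex combination of positive operators is positive), and that the marginal constraints hold: taking partial traces and using linearity gives $X_A+Y_A=\sum_i p_i\rho_i=\rho$ and $X_B+Y_B=\sum_i p_i\delta_i=\delta$. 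Hence $(X_{AB},Y_{AB})$ is feasible for $T_s(\rho,\delta)$.

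Finally, because the objective $tr(X_{AB}P_s(d)+Y_{AB}P_a(d))$ is linear, evaluating it at this averaged point yields $\sum_i p_i\,tr(X^{(i)}_{AB}P_s(d)+Y^{(i)}_{AB}P_a(d))=\sum_i p_i T_s(\rho_i,\delta_i)=\sum_i p_i\tilde{T}(\rho_i)$. Since $\delta$ is one particular incoherent state and $\tilde{T}(\rho)$ is the minimum of $T_s(\rho,\cdot)$ over all incoherent states, I obtain $\tilde{T}(\rho)\le T_s(\rho,\delta)\le\sum_i p_i\tilde{T}(\rho_i)$, as desired.

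There is no genuinely hard step: the argument is the standard ``optimal points average to a feasible point'' proof of convexity for a minimization whose objective is linear and whose feasible region is convex. The only points needing care are the two stability facts — that $\mathcal{I}_{\mathcal{E}}$ and the positive-semidefinite cone are each closed under convex combinations — together with checking that the partial-trace marginals of the averaged operators reproduce $\rho$ and $\delta$; all of these are immediate from linearity of the partial trace and the trace.
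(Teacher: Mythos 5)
Your proof is correct and takes essentially the same route as the paper: both average the optimal pairs $(X^{(i)}_{AB},Y^{(i)}_{AB})$ and the optimal incoherent states $\delta_i$ into a single feasible point for the program defining $T_s(\rho,\sum_i p_i\delta_i)$, invoke linearity of the objective, and finish using the fact that a convex mixture of incoherent states is incoherent together with the definition of $\tilde{T}$ as a minimum. Your write-up is in fact slightly more careful than the paper's, since you explicitly verify positivity, the marginal constraints, and the incoherence of $\delta$, which the paper leaves implicit.
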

\begin{proof}
	Assume $\sigma_i$ and $\sigma$ are the optimal incoherent states for $\r_i$ and $\rho$ in terms of $\tilde{T}(\cdot)$, respectively, $(X_{AB}^{(i)},Y_{AB}^{(i)})$ are the optimal for $T_s(\rho_i,\sigma_i)$ in terms of (\ref{s1}), 
	\begin{align}
\sum_ip_i\tilde{T}(\rho_i)=&\sum_i p_itr(X_{AB}^{(i)}P_s(d)+Y_{AB}^{(i)}P_{as}(d))\nonumber\\
\ge& T_s(\sum_ip_i\rho_i,\sum_i p_i\sigma_i)\nonumber\\
\ge&\tilde{T}(\rho),
	\end{align} 
	here the first inequality is due to the properties of $(X_{AB}^{(i)},Y_{AB}^{(i)})$ and the formula (\ref{s1}). Specifially,
	\begin{align*}
	&X_A^{(i)}+Y_A^{(i)}=\rho_i,\hspace{3mm}X_B^{(i)}+Y_B^{(i)}=\s_i\\
	\Rightarrow&\sum_i p_i(X_A^{(i)}+Y_A^{(i)})=\sum_i p_i\rho_i,\\
	&\sum_ip_i(X_B^{(i)}+Y_B^{(i)})=\sum_ip_i\s_i,
	\end{align*}
	then $(\sum_i p_iX_{AB}^{(i)},\sum_ip_i Y_{AB}^{(i)})$ satifisty the properties of (\ref{s1}). The second inequality is due to the definition of $\tilde{T}(\cdot)$.
\end{proof}

\begin{Theorem}\label{subpro}
	Assume $\rho$ and $\sigma$ are two quantum states on the system $\mathcal{H},$ then
	\begin{align*}
	\tilde{T}(\rho\otimes\sigma)\le \tilde{T}(\rho)+\tilde{T}(\sigma).
	\end{align*}
\end{Theorem}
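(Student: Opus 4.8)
The plan is to produce, for a well-chosen incoherent target, an explicit feasible pair $(X_{AB},Y_{AB})$ for the semidefinite program (\ref{s1}) defining $T_s(\rho\otimes\sigma,\cdot)$ whose objective value is at most $\tilde{T}(\rho)+\tilde{T}(\sigma)$. Since $\tilde{T}(\rho\otimes\sigma)=\min_{\delta}T_s(\rho\otimes\sigma,\delta)$ and the product $\delta_\rho\otimes\delta_\sigma$ of two incoherent states is again incoherent (it is diagonal in the product basis), it suffices to bound $T_s(\rho\otimes\sigma,\delta_\rho\otimes\delta_\sigma)$, where $\delta_\rho$ and $\delta_\sigma$ are the optimal incoherent states for $\rho$ and $\sigma$ in terms of $\tilde{T}(\cdot)$.

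First I would fix optimal primal solutions $(X^\rho,Y^\rho)$ and $(X^\sigma,Y^\sigma)$ for $T_s(\rho,\delta_\rho)$ and $T_s(\sigma,\delta_\sigma)$, and note that $\pi^\rho:=X^\rho+Y^\rho$ and $\pi^\sigma:=X^\sigma+Y^\sigma$ are couplings. After reordering tensor factors so that the two ``$A$'' subsystems and the two ``$B$'' subsystems are grouped, the operator $\Pi:=\pi^\rho\otimes\pi^\sigma$ is a coupling of $(\rho\otimes\sigma,\delta_\rho\otimes\delta_\sigma)$. The central idea is to expand $\Pi$ into the four positive semidefinite blocks $X^\rho\otimes X^\sigma$, $X^\rho\otimes Y^\sigma$, $Y^\rho\otimes X^\sigma$, $Y^\rho\otimes Y^\sigma$ and then assign each block individually either to $X_{AB}$ or to $Y_{AB}$. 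Because the four blocks always sum to $\Pi$, every such assignment yields a pair with the correct marginals and with $X_{AB},Y_{AB}\ge 0$, so feasibility is automatic and I am free to route each block to whichever projector it is cheaper against.

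To evaluate the cost I would use $P_s(d^2)=\tfrac12(I+S_{\mathcal K})$ and $P_a(d^2)=\tfrac12(I-S_{\mathcal K})$ together with the factorization $S_{\mathcal K}\cong S_1\otimes S_2$ (the product SWAP splits into the SWAPs of the two $A$/$B$ pairs), which gives $\tr[(S_1\otimes S_2)(B_1\otimes B_2)]=\tr(S_1B_1)\,\tr(S_2B_2)$ on product blocks. Writing $x_i,y_i$ for the SWAP-traces and $m_i,n_i$ for the ordinary traces of the factor-$i$ pieces (so $m_i+n_i=1$ and $\tilde{T}(\rho)=\tfrac12[1+(x_1-y_1)]$, $\tilde{T}(\sigma)=\tfrac12[1+(x_2-y_2)]$), a block contributes $\tfrac12(\text{trace})\pm\tfrac12(\text{SWAP-trace})$ according to its slot, so the cheaper placement costs $\tfrac12(\text{trace}-|\text{SWAP-trace}|)$. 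Summing over the four blocks collapses the total cost to
\begin{align*}
\tfrac12\big[\,1-(|x_1|+|y_1|)\,(|x_2|+|y_2|)\,\big].
\end{align*}
Since a SWAP has eigenvalues $\pm1$, each positive semidefinite block obeys $|\text{SWAP-trace}|\le\text{trace}$, whence $u:=|x_1|+|y_1|\le 1$ and $v:=|x_2|+|y_2|\le 1$.

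Finally I would compare this with $\tilde{T}(\rho)+\tilde{T}(\sigma)=1+\tfrac12[(x_1-y_1)+(x_2-y_2)]$, reducing the desired inequality to $(y_1-x_1)+(y_2-x_2)-uv\le 1$; using $y_i-x_i\le|x_i|+|y_i|$ this follows from $u+v-uv=1-(1-u)(1-v)\le 1$. The step I expect to be the genuine obstacle is the block assignment: the naive ``match the symmetry type'' rule fails, because two antisymmetric factors multiply to a \emph{symmetric} product block. For instance, when both $\rho$ and $\sigma$ are incoherent the block $Y^\rho\otimes Y^\sigma$ lands in the symmetric subspace of the product yet must nonetheless be routed to the low-cost $Y_{AB}$ slot. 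Recognizing that feasibility is independent of the assignment, so that each block may be placed against its cheaper projector, is exactly what makes the bound come out correctly and sharp, as the incoherent case realizes $(1-u)(1-v)\ge 0$ with equality.
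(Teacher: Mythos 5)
Your proof is correct, and its skeleton is the same as the paper's: both take the product $\delta_\rho\otimes\delta_\sigma$ of the two optimal incoherent states as the target, build a feasible pair for $T_s(\rho\otimes\sigma,\delta_\rho\otimes\delta_\sigma)$ out of the four product blocks $X^\rho\otimes X^\sigma$, $X^\rho\otimes Y^\sigma$, $Y^\rho\otimes X^\sigma$, $Y^\rho\otimes Y^\sigma$, and evaluate costs via the factorization of the SWAP on the doubled space as $S\otimes S$. Where you differ is the bookkeeping at the end. The paper commits to a \emph{fixed} routing and gets a closed-form defect, $\tilde{T}(\rho)+\tilde{T}(\sigma)-T_s(\rho\otimes\sigma,\delta_\rho\otimes\delta_\sigma)\ge \frac12\bigl(1+\mathrm{tr}[(X-Y)S]\bigr)\bigl(1+\mathrm{tr}[(M-N)S]\bigr)=2\tilde{T}(\rho)\tilde{T}(\sigma)\ge 0$, which is a sharper structural statement; you instead route each block adaptively to its cheaper projector, get the intermediate bound $\frac12[1-uv]$, and close with the elementary inequality $u+v-uv\le 1$. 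Both finishes are valid. Notably, your caveat about the ``match the symmetry type'' rule is exactly on target, and it flags a genuine sign slip in the paper: the routing written in Eq.~(\ref{mnxy}) (cross term $+(M-N)\otimes(X-Y)\,S\otimes S$, i.e.\ same-type products sent to the symmetric slot) is the naive rule, whose cost is $\frac12[1+ab]$ with $a=\mathrm{tr}[(X-Y)S]$, $b=\mathrm{tr}[(M-N)S]$; this fails precisely in the incoherent case $a=b=-1$, as you observe. The paper's final identity in Eq.~(\ref{sub}) only comes out if the cross term's sign is flipped, i.e.\ if same-type products go to the \emph{antisymmetric} slot. Your adaptive-routing argument sidesteps this issue entirely, so in effect your proof repairs the paper's intermediate computation while reaching the same conclusion.
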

\begin{proof}
	Let $\rho$ and $\sigma$ be two density matrices on the system $\mathcal{H},$ then we denote $\delta_1$ and $\delta_2$ be two optimal incoherent states for $\r$ and $\s$ in terms of $\tilde{T}$, respectively, that is, 
	\begin{align}
	\tilde{T}(\rho)=&\frac{1}{2}\min_{X_{AB},Y_{AB}}[tr(X_{AB}P_s(d)+Y_{AB}P_a(d))]\nonumber\\
	\emph{s.t.}\hspace{5mm}& X_A+Y_A=\rho, \hspace{3mm} X_B+Y_B=\delta_1\nonumber\\
	&X_{AB},Y_{AB}\ge 0.\label{xy}\\
	\tilde{T}(\sigma)=&\frac{1}{2}\min_{M_{AB},N_{AB}}[tr(M_{AB}P_s(d)+N_{AB}P_a(d))]\nonumber\\
	\emph{s.t.}\hspace{5mm}& M_A+N_A=\sigma, \hspace{3mm} M_B+N_B=\delta_2.\nonumber\\
	&M_{AB},N_{AB}\ge 0.\label{mn}
	\end{align}
	Here we denote $(X_{AB},Y_{AB})$ and $(M_{AB},N_{AB})$ are the optimal couples for $(\r,\delta_1)$ and $(\s,\delta_2),$ respectively. Next we compute 
	\begin{widetext}
		\begin{align}
		&\frac{1}{2}[\mathrm{tr}[M+X+N+Y+(M+X-N-Y)S]-\mathrm{tr}[(X+Y)\otimes(M+N)+(M\otimes X-N\otimes X-M\otimes Y+N\otimes Y)S\otimes S]]\nonumber\\
		=&\frac{1}{2}[1+\mathrm{tr}(M-N)S+\mathrm{tr}(X-Y)S+tr(M-N)S\otimes(X-Y)S]\nonumber\\
		=&\frac{1}{2}[(1+\mathrm{tr}(M-N)S)(1+\mathrm{tr}(X-Y)S)]\ge0\label{sub}
		\end{align}
	\end{widetext}
	As $\delta_1$ and $\delta_2$ are incoherent states, so it is with $\delta_1\otimes\delta_2$. Next as $(X_{AB},Y_{AB})$ and $(M_{AB},N_{AB})$ satisfy (\ref{xy}) and (\ref{mn}), respectively, we have 
	\begin{align*}
	(X_{A}+Y_{A})\otimes(M_A+N_A)=&\rho\otimes\s,\\
	(X_B+Y_B)\otimes(M_B+N_B)=&\delta_1\otimes\delta_2,
	\end{align*}
	hence,
	\begin{align}
	&\tilde{T}(\rho\otimes\delta)\nonumber\\
	\le& T_s(\rho\otimes\sigma,\delta_1\otimes\delta_2), \nonumber\\
	\le& \frac{1}{2}\mathrm{tr}[(X+Y)\otimes(M+N)\nonumber\\
	+&(M\otimes X-N\otimes X-M\otimes Y+N\otimes Y)S\otimes S]\label{mnxy},
	\end{align}
	then by combing (\ref{sub}) and (\ref{mnxy}), we have
	\begin{align}
	\tilde{T}(\r)+\tilde{T}(\s)\ge \tilde{T}(\r\otimes\s).
	\end{align} 
\end{proof}
\begin{Lemma}\label{rank}
	Assume $X_{AB}$ is a bipartite substate on $\mathcal{H}_d\otimes\mathcal{H}_d$, $i.$ $e.$ $X_{AB}\ge0,$ and $\mathrm{tr}X_{AB}\le 1,$ if $\rank(X_B)=1,$ or $\rank(X_A)=1,$ then $X_{AB}=X_A\otimes X_B.$
\end{Lemma}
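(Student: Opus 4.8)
The plan is to show that a rank-one marginal forces $X_{AB}$ to be supported on $\mathcal{H}_A$ tensored with a single ray of $\mathcal{H}_B$, which immediately yields the product form. First I would reduce to the case $\rank(X_B)=1$; the case $\rank(X_A)=1$ is identical after interchanging the roles of $A$ and $B$. Writing $X_B=\tr_A X_{AB}$, the rank-one hypothesis lets me fix a unit vector $\ket{\psi}\in\mathcal{H}_B$ with $X_B=\mu\proj{\psi}$, where $\mu=\tr X_{AB}=\tr X_B>0$ (the case $X_{AB}=0$ being trivial).

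Next I would localize the support of $X_{AB}$ on the $B$ side. Let $Q=I_A\otimes(I_B-\proj{\psi})\ge 0$ be the projector onto $\mathcal{H}_A\otimes\{\ket{\psi}\}^{\perp}$. Pushing the $A$ system through the partial trace gives $\tr(X_{AB}\,Q)=\tr\big[(I_B-\proj{\psi})X_B\big]=\mu\,\bra{\psi}(I_B-\proj{\psi})\ket{\psi}=0$. Since $X_{AB}\ge 0$ and $Q\ge 0$, the vanishing of $\tr(Q^{1/2}X_{AB}Q^{1/2})$ forces $Q^{1/2}X_{AB}Q^{1/2}=0$, hence $QX_{AB}=X_{AB}Q=0$. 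With $P=I_A\otimes\proj{\psi}$ the complementary projector ($P+Q=I$), this gives $X_{AB}=(P+Q)X_{AB}(P+Q)=PX_{AB}P$, so $X_{AB}$ lives entirely inside $\mathcal{H}_A\otimes\lin\{\ket{\psi}\}$.

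Finally I would read off the product structure. Because $P=I_A\otimes\proj{\psi}$, any operator of the form $PX_{AB}P$ factorizes as $Y_A\otimes\proj{\psi}$ for some $Y_A\ge 0$ on $\mathcal{H}_A$; tracing out $B$ identifies $Y_A=\tr_B X_{AB}=X_A$, so $X_{AB}=X_A\otimes\proj{\psi}$. Since $X_B=\mu\proj{\psi}$ with $\mu=\tr X_A$, this is exactly $X_{AB}=X_A\otimes X_B$ in the trace-one normalization relevant to (\ref{s1}) (in general $X_{AB}=\mu^{-1}X_A\otimes X_B$, which suffices for the factorization used in Theorem \ref{ps}).

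The only delicate point is the passage from the scalar identity $\tr(X_{AB}Q)=0$ to the operator identity $X_{AB}Q=0$: this is precisely where positive semidefiniteness is indispensable, through the fact that $\tr(AB)=0$ for $A,B\ge 0$ implies $AB=0$. Everything else is bookkeeping with partial traces, and keeping $X_B=\mu\proj{\psi}$ explicit makes the normalization transparent. I expect this positivity-and-support argument, rather than any computation, to be the crux; an alternative route through the spectral decomposition $X_{AB}=\sum_k\lambda_k\proj{v_k}$ and the Schmidt decomposition of each $\ket{v_k}$ would reach the same conclusion but is slightly more cumbersome.
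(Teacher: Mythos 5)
Your proof is correct, and it takes a genuinely different route from the paper's. The paper argues in coordinates: it writes $X_{AB}$ as a $d\times d$ array of blocks $X_{ij}$ over the $A$-indices, observes that the diagonal blocks are positive semidefinite and sum to the rank-one marginal $X_B=m\proj{\phi}$, so each $X_{kk}=l_{kk}\proj{\phi}$, and then invokes positivity of the $2\times 2$ block submatrices formed by $X_{ii},X_{ij},X_{ji},X_{jj}$ to force every off-diagonal block to satisfy $X_{ij}=l_{ij}\proj{\phi}$ as well, yielding $X_{AB}=M\otimes\proj{\phi}$; the case $\rank(X_A)=1$ is reduced to this one by conjugating with the swap operator, exactly as you do by exchanging the roles of $A$ and $B$. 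Your argument replaces this block bookkeeping with a single basis-free support argument: $\tr(X_{AB}Q)=0$ for the projector $Q=I_A\otimes(I_B-\proj{\psi})$, positivity upgrades this scalar identity to $QX_{AB}=X_{AB}Q=0$, and hence $X_{AB}=PX_{AB}P=Y_A\otimes\proj{\psi}$. This is shorter and cleaner, and it makes explicit the fact the paper leaves implicit, namely the range-inclusion property of PSD block matrices (positivity of the $2\times 2$ block matrix forces the range of $X_{ij}$ into the range of $X_{ii}$), which is exactly what your trace-zero-implies-product-zero step packages globally. A further point in your favor: you flag the normalization defect in the statement itself. What both proofs actually deliver is $X_{AB}=X_A\otimes\proj{\psi}$, i.e.\ $X_{AB}=\mu^{-1}X_A\otimes X_B$ with $\mu=\tr X_{AB}$, so the literal equality $X_{AB}=X_A\otimes X_B$ holds only when $\tr X_{AB}=1$; the paper's own conclusion $M\otimes\phi$ has the same feature, but the discrepancy goes unremarked there. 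Since the lemma enters Theorem \ref{ps} only through the factorized form, nothing downstream is affected, but your formulation of the conclusion is the accurate one.
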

\begin{proof}
	When $\rank(X_B)=1,$ then $X_B=m\ket{\phi}\bra{\phi},$ $m\in(0,1].$ Next let 
	\begin{align}
	X_{AB}=\begin{pmatrix}
	X_{11}&X_{12}&\cdots&X_{1d}\\
	X_{21}&X_{22}&\cdots&X_{2d}\\
	\vdots&\vdots&\ddots&\vdots\\
	X_{d1}&X_{d2}&\cdots&X_{dd}
	\end{pmatrix},
	\end{align}
	here $X_{ij}$ are the block matrices of $X_{AB},$ $i,j=1,2,\cdots d,$ $X_{kk}\ge0,$ $k=1,2,\cdots,d,$ $X_B=X_{11}+X_{22}+\cdots+X_{dd}=m\ket{\phi}\bra{\phi},$ that is, $X_{kk}=l_{kk}\ket{\phi}\bra{\phi},$ $\sum_k l_{kk}=m.$ Next as $X_{AB}$ is semidefinite positive, then 
	\begin{align*}
	\begin{pmatrix}
	X_{ii}& X_{ij}\\
	X_{ji}& X_{kk}
	\end{pmatrix}\ge 0,
	\end{align*}
	hence $X_{ij}=l_{ij}\ket{\phi}\bra{\phi},$ that is,
	\begin{align*}
	X_{AB}=\begin{pmatrix}
	l_{11} \phi&l_{12}\phi&\cdots&l_{1d}\phi\\
		l_{21}\phi&l_{22}\phi&\cdots&l_{2d}\phi\\
	\vdots&\vdots&\ddots&\vdots\\
	l_{d1}\phi&l_{d2}\phi&\cdots&l_{dd}\phi
	\end{pmatrix}=M\otimes \phi.
	\end{align*}
	If $\rank(X_A)=1,$ let $Y_{AB}=S X_{AB} S^{\dagger},$ then $\rank(Y_{B})=1,$ due to the above proof, $Y_{AB}$ can be written as the product form. Hence, $X_{AB}$ can be written as the product form.
 \end{proof}

\end{document}